\newlength\figureheight 
\newlength\figurewidth 
\pgfplotsset{compat=newest}
\pgfplotsset{plot coordinates/math parser=false}
\newtheoremstyle{specialcasestyle}{1mm}{1mm}{\upshape}{}{\bfseries\upshape}{.}{0mm}{}
\theoremstyle{specialcasestyle}
\newtheorem{lem}{Lemma}
\newtheorem{rem}{Remark}
\begin{document}

\title{Efficient Importance Sampling for Large Sums of Independent and Identically Distributed Random Variables}
\author{Nadhir Ben Rached \thanks{Chair of Mathematics for Uncertainty Quantification, Department of Mathematics, RWTH Aachen University, Aachen, Germany ({\tt benrached@uq.rwth-aachen.de}).}, Abdul-Lateef Haji-Ali \thanks{School of Mathematical \& Computer Sciences, Heriot-Watt University, Edinburgh, UK ({\tt a.hajiali@hw.ac.uk}).},  Gerardo Rubino \thanks{INRIA, Rennes - Bretagne Atlantique, France ({\tt Gerardo.Rubino@inria.fr}).}, \\ and Ra\'ul Tempone \thanks{Computer, Electrical and Mathematical Sciences \& Engineering Division (CEMSE), King Abdullah University of Science and Technology (KAUST), Thuwal, Saudi Arabia ({\tt raul.tempone@kaust.edu.sa}). Alexander von Humboldt Professor in Mathematics for Uncertainty Quantification, RWTH Aachen University, Aachen, Germany ({\tt tempone@uq.rwth-aachen.de}).}.
	\\
	
		\thanks{This work was supported by the KAUST Office of Sponsored Research (OSR) under Award No. URF/1/2584-01-01 and the Alexander von Humboldt Foundation. A-L. Haji-Ali was supported by a Sabbatical Grant from the Royal Society of Edinburgh. }

}

\date{}
\maketitle
\thispagestyle{empty}
\vspace{-12mm}
\begin{abstract}
We discuss estimating the probability that the sum of nonnegative independent and identically distributed random variables falls below a given threshold, i.e., $\mathbb{P}(\sum_{i=1}^{N}{X_i} \leq \gamma)$, via importance sampling (IS).  We are particularly interested in the rare event regime when  $N$ is large and/or $\gamma$ is small. 
The exponential twisting is a popular technique for similar problems that, in most cases, compares favorably to other estimators.
 However, it has some limitations: i) it assumes the knowledge of the moment generating function of $X_i$ and ii) sampling under the new IS PDF is not straightforward and might be expensive.
 The aim of this work is to propose an alternative IS PDF that approximately yields, for certain classes of distributions and in the rare event regime,  at least the same performance as the exponential twisting technique  and,  at the same time, does not introduce serious limitations. 
The first class includes distributions whose probability density functions (PDFs) are asymptotically equivalent, as $x \rightarrow 0$, to $bx^{p}$, for $p>-1$ and $b>0$. For this class of distributions, the Gamma IS PDF with appropriately chosen parameters retrieves approximately, in the rare event regime corresponding to small values of $\gamma$ and/or large values of $N$, the  same performance of the estimator based on the use of the exponential twisting technique. In the second class, we consider the Log-normal setting, whose  PDF at zero vanishes faster than any polynomial, and we show numerically that a Gamma IS PDF with optimized parameters  clearly outperforms the exponential twisting IS PDF. Numerical experiments validate the efficiency of the proposed estimator in delivering a highly accurate estimate 
 in the regime of large $N$ and/or small $\gamma$.
 
\end{abstract}

{\bf Keywords: }Importance sampling, rare event, exponential twisting, Gamma IS PDF. 

{\bf AMS subject classifications:} 65C05, 62P30.

\section{Introduction}
Efficient estimation of rare event probabilities finds various applications in the performance evaluation/prediction of wireless communication systems operating over fading channels \cite{alouini}. In particular, the left-tail of the cumulative distribution function (CDF) of sums of  nonnegative independent and identically distributed (i.i.d.) random variables is an example of a rare event probability that is of practical importance. More specifically, the outage probability at the output of equal gain combining (EGC) and maximum ratio combining (MRC) receivers can be expressed as the CDF of the sum of fading channel envelops (for EGC) and fading channel gains (for MRC) \cite{7328688}. 

The accurate estimation of the left-tail of the CDF of sums of random variables requires the use of variance reduction techniques because the naive Monte Carlo  sampler is computationally expensive \cite{opac-b1132466,rubino2009rare,opac-b1123521}. Moreover, the existing closed-form approximations \cite{8737752,8744610,4570452,1388730,4781943,4939219,4814351} fail to be accurate when the tail of the CDF is considered. The literature is  rich in works in which variance reduction techniques were developed to efficiently estimate rare event probabilities corresponding to the left-tail of the CDF of sums of random variables, see \cite{asmussen2016exponential,7328688, 8472928,botev2019fast,gulisashvili2016,Nadhir_SLN,9014029} and the references therein.  For instance, the authors in \cite{asmussen2016exponential}  used exponential twisting, which is a popular importance sampling (IS) technique, to propose a logarithmically efficient estimator of the CDF of the sum of i.i.d. Log-normal random variables. The logarithmic efficiency is a popular property in rare event simulation  used to ensure  estimators' efficiency \cite{7328688}. Let $\hat{\alpha}$ be an unbiased estimator of $\alpha$, i.e., $\mathbb{E}[\hat{\alpha}]=\alpha$. We say that $\hat{\alpha}$ is logarithmically efficient if $\lim_{\alpha \rightarrow 0}\frac{\log(\mathbb{E}[\hat{\alpha}^2])}{\log(\alpha^2)}=1$. In \cite{gulisashvili2016}, the CDF of the sum of correlated Log-normal random variables was considered. The authors developed an IS estimator  based on shifting the mean of the corresponding  multivariate Gaussian distribution. Under mild assumptions, they proved that their proposed estimator is logarithmically efficient. Based on \cite{gulisashvili2016} and under the assumption that the left-tail sum distribution is determined by only one dominant component, the authors in \cite{Nadhir_SLN} combined IS with a control variate technique to construct an estimator with the asymptotically vanishing relative error property, which is the most desired property in the field of rare event simulations \cite{opac-b1132466}. In \cite{7328688}, two unified IS approaches were developed using the hazard rate twisting concept \cite{Juneja:2002:SHT:566392.566394,BenRached2016} to efficiently estimate the CDF of sums of independent random variables. The first estimator is shown to be logarithmically efficient, whereas the second achieves the bounded relative error property for i.i.d. sums of random variables and under the given assumption that was shown to hold for most of the practical distributions used to model the amplitude/power of fading channels. The bounded relative error is a stronger criterion than the logarithmic efficiency. We say that an unbiased estimator $\hat{\alpha}$ of $\alpha$ achieves the bounded relative error property if $\frac{\mathrm{var}[\hat{\alpha}]}{\alpha^2}$ is asymptotically bounded when $\alpha$ goes to $0$, see \cite{7328688}

The efficiency of the above mentioned estimators was studied when the number of summand $N$ was kept fixed. More specifically, recall that the objective is to efficiently estimate the probability that the sum of nonnegative i.i.d. random variables falls below a given threshold, i.e., $\mathbb{P}(\sum_{i=1}^{N}{X_i} \leq \gamma)$. A close look at the above mentioned estimators shows that the efficiency results were proved when the rarity parameter $\gamma$ decreases whereas $N$ is kept fixed. However, in most cases, the efficiency of the existing estimators is considerably affected when $N$ increases. This represents the main motivation of the present work. We aim to introduce a highly accurate estimator that efficiently estimate $\mathbb{P}(\sum_{i=1}^{N}{X_i} \leq \gamma)$ in the rare event regime when $N$ is large and/or $\gamma$ is small. 

It is well-acknowledged that the exponential twisting technique compares favorably, in most  cases,  to existing estimators. It is the optimal IS probability density function (PDF) in the sense that it minimizes the Kullback-Leibler (KL) divergence with respect to the underlying PDF under certain constraints \cite{doi:10.1177/0037549707087713}. However, it has  some limitations. First, it requires the knowledge of the moment generating function of $X_i$, $i=1,2,\cdots,N$. Second, sampling according to the new IS PDF is not straightforward and might be expensive. Moreover, the twisting parameter is not available in a closed-form expression and needs to be estimated numerically. 
Motivated by the above limitations, we summarize the main contributions of the present work as follows: 
\begin{itemize}
\item We propose an alternative IS estimator that approximately yields, for certain classes of distributions and in the rare event regime, at least the same efficiency as the one given by the estimator based on exponential twisting and at the same time does not introduce the above limitations.
\item The first class includes distributions whose PDFs vanish at zero  polynomially. For this class of distributions, the Gamma IS PDF with appropriately chosen parameters retrieves approximately, in the regime of rare events corresponding to small values of $\gamma$ and/or large values of $N$, the same performances as the exponential twisting PDF.
\item The above result does not apply to the Log-normal setting as the corresponding PDF approaches zero faster than any polynomials.  We show numerically  that in this setting, the Gamma IS PDF with optimized parameters achieves a substantial amount of  variance reduction compared to the one given by exponential twisting. 
\item Numerical comparisons with some of the existing estimators validate that the proposed estimator can deliver highly accurate estimates with low computational cost in the rare event regime corresponding to large $N$ and/or small $\gamma.$
\end{itemize}
The paper is organized as follows. In section 2, we define the problem setting and motivate the work. In section 3, we introduce the exponential twisting approach and present its limitations. The main contribution of this work is presented in section 4, where we show that the Gamma IS PDF with optimized parameters retrieves approximately, for certain classes of distributions and in the rare event regime, at least the same performance as the exponential twisting technique. Finally, numerical experiments are shown in section 5 to compare the proposed estimator with various existing estimators.

\section{Problem Setting and Motivation}
Let $X_1,X_2,\cdots,X_N$ be i.i.d.  nonnegative random variables with common PDF $f_X(\cdot)$ and CDF $F_{X}(\cdot)$. Let $\boldsymbol{x}=(x_1,\cdots,x_N)^{t}$  and  $h_{\bold{X}}(\boldsymbol{x})=\prod_{i=1}^{N}{f_X(x_i)}$ be the joint PDF of the random vector  $(X_1,\cdots,X_N)^{t}$. We consider the estimation of 
\begin{align}\label{qoi}
\alpha(\gamma,N)=\mathbb{P}_{h_{\bold{X}}} \left (\sum_{i=1}^{N}{X_i} \leq \gamma \right ),
\end{align}
where $\mathbb{P}_{h_{\bold{X}}} (\cdot)$ is the probability under which the random vector $\bold{X}=(X_1,\cdots,X_N)^{t}$ is distributed according to $h_{\bold{X}}(\cdot)$,  i.e., for any Borel measurable set $A$ in $\mathbb{R}^N$, we have $\mathbb{P}_{h_{\bold{X}}} \left ( \bold{X} \in A\right )=\int_{A} h_{\bold{X}}(\boldsymbol{x})d\boldsymbol{x}$. 
As an application, the quantity of interest $\alpha(\gamma,N)$ could represent the outage probability at the output of EGC and MRC wireless receivers operating over fading channels. In fact, the instantaneous signal to noise ratio (SNR) at EGC or MRC diversity receivers is given as follows \cite{7328688} 
	\begin{align}\label{snr}
		\gamma_{end}=\frac{E_s}{N_0\sqrt{N^{1-p+q}}} \left ( \sum_{i=1}^{N}{R_i^p}\right )^q,
	\end{align} 
	where $N$ is the number of diversity branches, $\frac{E_s}{N_0}$ is the SNR per symbol at the transmitter, $R_i$, $i=1,2,...,N$, is the fading channel envelope and
	\begin{align}
		(p,q) = \begin{cases} (1,2) & \text{EGC}, \\ (2,1)& \text{MRC}.\end{cases}
	\end{align}
	The outage probability is defined as the probability that the SNR falls below a given threshold. Using (\ref{snr}), it can be easily shown that the outage probability at the output of EGC and MRC receivers can be expressed as the CDF of the sum of fading channel envelops (for EGC) and fading channel gains (for MRC), and hence can be expressed as in (\ref{qoi}).

We focus on the estimation of $\alpha(\gamma,N)$ when $N$ is large and/or $\gamma$ is small. Before delving into the core of the paper, we illustrate via a simple example that the efficiency of an IS estimator, that performs well when $\gamma$ decreases and $N$ is not sufficiently large, can deteriorate when we increase the values of $N$.
We first write the quantity of interest as
 \begin{align}\label{transformation}
 \nonumber \mathbb{P}_{h_{\bold{X}}}\left (\sum_{i=1}^{N}{X_i}\leq \gamma \right )&= \mathbb{P}_{h_{\bold{X}}}\left (\sum_{i=1}^{N}{X_i}\leq \gamma, X_i \leq \gamma \text{  }\forall i \right )\\
 \nonumber &= \mathbb{P}_{h_{\boldsymbol{w}}} \left (\sum_{i=1}^{N}{w_i}\leq 1 \right ) \left (F_{X}(\gamma)\right )^N\\
 &= \mathbb{E}_{h_{\boldsymbol{w}}} \left [ (F_X(\gamma))^N \textbf{1}_{(\sum_{i=1}^{N}{w_i} \leq 1)}\right  ]=\mathbb{E}_{h_{\boldsymbol{w}}} \left[ \hat{\alpha}(\gamma,N)\right ],
 \end{align}
where $w_i$ is equal in distribution to $\frac{X_i}{\gamma}$ conditional on the event $\{X_i \leq \gamma\}$, $i=1,2,.\cdots,N$, and $h_{\boldsymbol{w}}(\boldsymbol{w})=\prod_{i=1}^{N}{f_w(w_i)}$  with $f_{w}(\cdot)$ is the PDF of $w_i$, i.e., the conditional PDF of $\frac{X_i}{\gamma}$ given the event $\{ \frac{X_i}{\gamma} \leq 1\}$, and is given by $f_w(w)=\frac{\gamma f_X(\gamma w)}{F_X(\gamma)} \textbf{1}_{(w<\gamma)}$. Note that $\mathbb{E}_{h_{\boldsymbol{w}}}[\cdot]$ denotes the expectation under $h_{\boldsymbol{w}}(\cdot)$.
  The  estimator is then given by estimating the right-hand side term of (\ref{transformation}) by the naive Monte Carlo method
 \begin{align*}
  \hat{\alpha}_{M}(\gamma,N)=\frac{1}{M} \sum_{k=1}^{M}{(F_X(\gamma))^N \textbf{1}_{(\sum_{i=1}^{N}{w_i^{(k)}} \leq 1)}},
 \end{align*}
where $(w_1^{(k)},\cdots,w_N^{(k)})$ , $k=1,\cdots,M$, are independent realizations sampled according to $h_{\boldsymbol{w}}(\cdot)$.
Note that this estimator can be understood as applying IS with IS PDF being the truncation of the underlying PDF over the hypercube $[0,\gamma]^N$. It can be easily proved that for fixed $N$,  this estimator achieves the desired bounded relative error property with respect to the rarity parameter $\gamma$ for distributions that satisfy $f_{w}(x) \sim b x^{p}$ as $x$ approaches zero and for $p>-1$ and $b>0$, see \cite{8472928}. This property means that the squared coefficient of variation, defined as the ratio between the variance of an estimator and its squared mean, remains bounded as $\gamma \rightarrow 0$, 
  see \cite{opac-b1132466}. More precisely, when this property holds,
   the number of required samples to meet a fixed accuracy requirement remains bounded independently of how small $\alpha(\gamma,N)$ is.  The question now is what happens when $N$ is large. 
Using the Chernoff bound, we obtain for all $\eta>0$
$$
 \mathbb{P}_{h_{\boldsymbol{w}}}(\sum_{i=1}^{N}{w_i}\leq 1) \leq \exp \left ( \eta+N \log \left ( \mathbb{E}_{f_w}{}[\exp(-\eta w)]\right )\right ),
$$
where $\mathbb{E}_{f_w}[\cdot]$ denotes the expectation under $f_{w}(\cdot)$. The squared coefficient of variation of $\hat{\alpha}(\gamma,N)$ in (\ref{transformation}) is given by
\begin{align*}
\nonumber \text{SCV}(\hat{\alpha}(\gamma,N))=\frac{\mathrm{var}_{h_{\boldsymbol{w}}}[\hat{\alpha}(\gamma,N)]}{\alpha^2(\gamma,N)}&=\frac{\mathbb{P}_{h_{\boldsymbol{w}}}(\sum_{i=1}^{N}{w_i}\leq 1)(1-\mathbb{P}_{h_{\boldsymbol{w}}}(\sum_{i=1}^{N}{w_i}\leq 1))}{\left (\mathbb{P}_{h_{\boldsymbol{w}}}(\sum_{i=1}^{N}{w_i}\leq 1) \right )^2}\\
&=\frac{1-\mathbb{P}_{h_{\boldsymbol{w}}}(\sum_{i=1}^{N}{w_i}\leq 1)}{\mathbb{P}_{h_{\boldsymbol{w}}}(\sum_{i=1}^{N}{w_i}\leq 1)}
\end{align*}
In particular,  when $\eta=1$, the squared coefficient of variation (which is asymptotically equal to $1/ \mathbb{P}_{h_{\boldsymbol{w}}}(\sum_{i=1}^{N}{w_i}\leq 1) $ in the regime of rare events) is lower bounded by $\exp \left ( -1-N \log \left ( \mathbb{E}_{f_w}{}[\exp(-w)]\right )\right )$.
This shows that the squared coefficient of variation increases at least exponentially, which proves that the efficiency of the estimator deteriorates when $N$ is large. 
\section{Exponential Twisting}
In this section, we review the popular exponential twisting IS approach and enumerate its limitations in estimating the quantity of interest. When applicable, it is well-acknowledged that the exponential twisting technique is expected to produce a substantial amount of variance reduction and to compare favorably, in most  cases, to other estimators \cite{asmussen2016exponential}. For distributions with light right tails and under the i.i.d. assumption, the estimator based on exponential twisting can be proved, under some regularity assumptions, to be logarithmically efficient when the probability of interest is either $\mathbb{P}_{h_{\bold{X}}}(\sum_{i=1}^{N}{X_i}>\gamma)$ and  $\gamma \rightarrow +\infty$ or $\mathbb{P}_{h_{\bold{X}}}(\sum_{i=1}^{N}{X_i}>\gamma N)$ and  $N \rightarrow +\infty$ \cite{opac-b1123521}. In the left tail setting, which is the region of interest in the present work, the exponential twisting was shown in \cite{asmussen2016exponential} to achieve the logarithmic efficiency property in the case of i.i.d. Log-normal random variables when the probability of interest is $\mathbb{P}_{h_{\bold{X}}}(\sum_{i=1}^{N}{X_i} <N\gamma)$ and either $N\rightarrow +\infty$ or $\gamma \rightarrow 0$.

In  \cite{doi:10.1177/0037549707087713}, the exponential twisting technique was also shown to be optimal in the sense that it minimizes the KL divergence with respect to the underlying PDF under the constraint that the rare set $\{\boldsymbol{x} \in \mathbb{R}^{N}_{+}, \text{  such that  } \sum_{i=1}^{N}{x_i} \leq \gamma  \} $ is no longer rare. The IS PDF is selected to be the solution of the following optimization problem, see \cite{doi:10.1177/0037549707087713},
\begin{align}
\nonumber &\inf_{h^*_{\bold{X}} \geq 0} \int{h^*_{\bold{X}}(\boldsymbol{x})\log \left ( \frac{h^*_{\bold{X}}(\boldsymbol{x})}{h_{\bold{X}}(\boldsymbol{x})}\right )d\boldsymbol{x}}\\
&s.t \hspace{4mm} \int{h^*_{\bold{X}}(\boldsymbol{x})d\boldsymbol{x}}=1\\
 \nonumber &\hspace{8mm} \mathbb{E}_{h^*_{\bold{X}}}\left [\sum_{i=1}^{N}{X_i} \right ]=\gamma\\
 & \nonumber \hspace{8mm} h^*_{\bold{X}}(\boldsymbol{x}) \geq 0, \hspace{2mm} x_i \geq 0 \text{ for all } i \in {1,2,\cdots,N}.
\end{align}
The solution of this problem is given as (see \cite{doi:10.1177/0037549707087713} for a more general setting)
\begin{align}
h^*_{\bold{X}}(\boldsymbol{x})=\frac{h_{\bold{X}}(\boldsymbol{x}) \exp \left (\theta^* \sum_{i=1}^{N}{x_i}\right )}{\mathbb{E}_{h_{\bold{X}}} \left [\exp \left ( \theta^* \sum_{i=1}^{N}{X_i}\right )\right ]}, \text{  }  \boldsymbol{x} \in \mathbb{R}^N_+
\end{align}
and $\theta^*$ solves
$$
\frac{\mathbb{E}_{h_{\bold{X}}}\left [\sum_{i=1}^{N}{X_i}\exp \left ( \theta^* \sum_{i=1}^{N}{X_i}\right )\right ]}{\mathbb{E}_{h_{\bold{X}}}\left [\exp \left ( \theta^* \sum_{i=1}^{N}{X_i}\right )\right ]}=\gamma.
$$
Hence,  by writing $h_{\bold{X}}^*(\boldsymbol{x})=\prod_{i=1}^{N}{f_{\bold{X}}^*(x_i)}$,  we clearly observe that the optimal density is given by exponentially twisting each univariate PDF $f_X(\cdot)$
$$
f_{X}^*(x)=\frac{f_X(x) \exp(\theta^* x)}{M(\theta^*)}, \hspace{2mm} x\geq 0,
$$
with $M(\theta)=E_{f_X}[\exp(\theta X)]$ and the optimal twisting parameter $\theta^*$ satisfies
$$
\frac{M^{'}(\theta^*)}{M(\theta^*)}=\frac{\gamma}{N}.
$$
Since the left-tail of sums of random variables is considered in this work, we have that $\theta^* \rightarrow -\infty$ as $\gamma \rightarrow 0$ and/or $N \rightarrow +\infty$  \cite{asmussen2016exponential}. Using the exponential twisting technique, the IS estimator of $\alpha(\gamma,N)$ using $M$ i.i.d. samples of $\bold{X}$ from $h_{\bold{X}}^*(\cdot)$ is given as follows
\begin{align*}
\hat{\alpha}_{\textrm{exp},M}(\gamma,N)=	\frac{1}{M}\sum_{k=1}^{M}{\textbf{1}_{(\sum_{i=1}^{M}{X_i^{(k)}} \leq \gamma)} (M(\theta^*))^N \exp \left ( -\theta^* \sum_{i=1}^{N}{X_i^{(k)}}\right )}
\end{align*}
Observe, however, that the exponential twisting technique has some restrictive limitations. The main one is that sampling according to $f_X^*(\cdot)$ is not straightforward. One generally needs the use of an acceptance-rejection technique, the complexity of which can be dramatic when the probability of acceptance is relatively small. In such a case, the computational complexity of the algorithm can be huge and even worse than the naive Monte Carlo method. There are other less critical drawbacks. First, computations are much simpler if the moment generating function $M(\theta)$ is known in closed-form. Such a requirement does not hold in general. Also, the twisting parameter $\theta^*$ does not have, in general, a closed-form expression, and hence, it should be approximated numerically. 
\section{Gamma Family as IS PDF}
The objective of this paper is to propose an alternative IS PDF that approximately yields, for certain classes of distributions that include most of the common distributions and in the rare event regime corresponding to large $N$ and/or small $\gamma$, at least the same performance as the exponential twisting technique  and  at the same time does not introduce serious limitations. We distinguish three scenarios depending on how the PDF $f_X(\cdot)$ approaches zero.

\subsection{$f_X(x) \sim b$ as $x$ goes to $0$ and $b>0$ is a constant}
Recall that the exponential twisting IS PDF satisfies
$$
f_X^*(x) \propto f_X(x) \exp(\theta^* x ), \hspace{2mm} x\geq 0,
$$
with $\theta^* \rightarrow -\infty$ as $\gamma \rightarrow 0$ and/or $N \rightarrow +\infty$. Therefore, as $f(x)\sim b$ and $b>0$, and by letting $\tilde{M}(\theta)=-\frac{1}{\theta}$, we instead consider the following IS PDF
$$
\tilde{f}_X(x)=\frac{\exp(\theta x)}{\tilde{M}(\theta)},  \hspace{2mm} x\geq 0.
$$
We choose $\theta$ to be equal to $\tilde{\theta }$ such that $\frac{\tilde{M}^{'}(\tilde{\theta})}{\tilde{M}(\tilde{\theta})}=\frac{\gamma}{N}$. Through  simple computation, we obtain $\tilde{\theta}=-\frac{N}{\gamma}$. To conclude, when $f(x)\sim b$ and $b>0$, we propose an IS PDF given by the exponential distribution with rate $\frac{N}{\gamma}$. 

\subsection{$f_X(x)=x^{p} g(x)$ with $g(x) \sim b$ as $x$ goes to $0$, $p>-1$, and $b>0$ is a constant}
Using the same methodology as in section 4.1, the IS PDF that we consider is
\begin{align}\label{gamma_is}
\tilde{f}_X(x)=\frac{x^{p}\exp(\theta x)}{\tilde{M}(\theta)}, \hspace{2mm} x\geq 0.
\end{align}
Therefore, the new PDF corresponds to the Gamma PDF with shape parameter $p+1$ and scale parameter $-1/\theta$. The normalizing constant is $\tilde{M}(\theta)=\frac{\Gamma(p+1)}{(-\theta)^{p+1}}$. Hence, the value $\theta$ is chosen to be equal to $\tilde{\theta}$ such that  $\frac{\tilde{M}^{'}(\tilde{\theta})}{\tilde{M}(\tilde{\theta})}=\frac{\gamma}{N}$ and is given by
\begin{align}\label{tildetheta}
\tilde{\theta}=-\frac{N}{\gamma}(p+1).
\end{align}
Using the Gamma IS PDF in (\ref{gamma_is}), the proposed IS estimator of $\alpha(\gamma,N)$ using $M$ i.i.d. samples of $\bold{X}$ from $\tilde{h}_{\bold{X}}(\boldsymbol{x})=\prod_{i=1}^{N}{\tilde{f}_{X}(x_i)}$ is
\begin{align*}
\nonumber \hat{\alpha}_{is,M}(\gamma,N)&=\frac{1}{M}\sum_{k=1}^{M}{\textbf{1}_{(\sum_{i=1}^{N}{X_i^{(k)}} \leq \gamma)}\prod_{i=1}^{N}{\frac{f_{X}(X_i^{(k)})}{\tilde{f}_X(X_i^{(k)})}}}\\
&=\frac{1}{M}\sum_{k=1}^{M}{\textbf{1}_{(\sum_{i=1}^{N}{X_i^{(k)}} \leq \gamma)}(\tilde{M}(\tilde{\theta}))^N\prod_{i=1}^{N}{\frac{f_{X}(X_i^{(k)})\exp \left (-\tilde{\theta} X_{i}^{(k)}\right )}{(X_{i}^{(k)})^p }}}
\end{align*}

\begin{figure*}
\begin{minipage}[h]{0.9775\textwidth}
\small
\begin{center}
\begin{tabular}{|l|l|l|}
\multicolumn{3}{c}{Table I: Some PDF asymptotics around zero  \footnote{ Functions $I_{\xi}(\cdot)$, and $K_{\xi}(\cdot)$ are respectively the modified Bessel functions of the first kind and order $\xi$ and the second kind and order $\xi$ \cite{gradshteyn2007}.  }}\\
\multicolumn{3}{c}{}\\
\hline
\textbf{Distribution} & PDF & Proportional to \\
&& as $x \rightarrow 0$ \\
\hline
Exponential &  $k \exp(-k x)$ & 1\\[1ex]
$k>0$  & & \\[1ex]
\hline
Gamma &  $\frac{1}{\beta^k\Gamma(k)}x^{k-1}\exp(-\frac{x}{\beta})$ & $x^{k-1}$ \\[1ex]
$k,\beta>0$  & & \\[1ex]
\hline
Weibull  & $\frac{k}{\lambda} (\frac{x}{\lambda})^{k-1} \exp(-(\frac{x}{\lambda})^k)$ & $x^{k-1}$ \\[1ex]
$k,\lambda>0$ & & \\[1ex]
\hline
Nakagami-m & $\frac{2m^m}{\Gamma(m)\Omega^m}x^{2m-1}\exp(-\frac{m}{\Omega}x^2)$  & $x^{2m-1}$ \\[1ex]
$m,\Omega>0$ & &\\[1ex]
\hline
Generalized Gamma & $\frac{p/a^d}{\Gamma(d/p)}x^{d-1}\exp (-(\frac{x}{a})^p)$  & $x^{d-1}$ \\[1ex]
$a,d,p>0$ & &\\[1ex]
\hline
Rice & $\frac{x}{\sigma^2} \exp(-\frac{x^2+\nu^2}{2\sigma^2})I_0((\frac{x\nu}{\sigma^2}))$  & $x$ \\[1ex]
 $\sigma>0,\nu \geq 0>0$ & &\\[1ex]
\hline
 Gamma-Gamma & $\frac{2(km)^{\frac{k+m}{2}}}{\Gamma(k)\Gamma(m)\Omega} (\frac{x}{\Omega})^{\frac{k+m}{2}-1} K_{k-m}\left ( 2\sqrt{\frac{km x}{\Omega}}\right )$ & $x^{k-1}$ \\[1ex]
 $\Omega>0,m>k>0,m-k \notin \mathbb{N}$ & &\\[1ex]
\hline
$\kappa-\mu$ distribution & $\frac{2\mu(1+\kappa)^{\frac{\mu+1}{2}}x^{\mu}}{\Omega^{\frac{\mu+1}{2}}\kappa^{\frac{\mu-1}{2}}\exp(\mu \kappa)}\exp(-\frac{(1+\kappa)\mu x^2}{\Omega}) I_{\mu-1} \left ( 2\mu \sqrt{\frac{\kappa(\kappa+1)}{\Omega}}x\right )$ & $x^{2\mu-1}$ \\[1ex]
$\kappa,\mu>0$ & & \\[1ex]
\hline
\end{tabular}
\end{center}
\end{minipage}
\end{figure*}

In Table I, we provide a non-exhaustive list of distributions that belong to section 4.2 (note that distributions in section 4.2 include those in section 4.1). These distributions are among the most used distributions to model the amplitudes and powers of wireless communications fading channels. 

 \begin{rem} \hspace{2mm}It is worth mentioning that for distributions satisfying $f_X(x)=x^{p} g(x)$ with $g(x) \sim b$ as $x$ goes to $0$, $p>-1$, and $b>0$ is a constant, the proposed approach with the Gamma IS PDF in (\ref{gamma_is}) with parameters $p$ and $\tilde{\theta}$ in (\ref{tildetheta})  achieves approximately, as $\gamma$ decreases to $0$ and/or $N$ increases, the same performance as the one given by the  exponential twisting without introducing serious limitations. 
Let $A_1$ and $A_2$ be the second moments of the proposed and the exponential twisting estimators, respectively. Then, the ratio between $A_1$ and $A_2$ has the following expression 
	\begin{align}\label{ratio}
	\nonumber \frac{A_1}{A_2}&=\frac{\mathbb{E}_{\tilde{h}_{\boldsymbol{x}}} \left [\bold{1}_{(\sum_{i=1}^{N}{X_i} \leq \gamma)} \prod_{i=1}^{N}{\frac{f_X^2(X_i)}{\tilde{f}_{X}^{2}(X_i)}} \right ]}{\mathbb{E}_{h^*_{\boldsymbol{x}}} \left [\bold{1}_{(\sum_{i=1}^{N}{X_i} \leq \gamma)} \prod_{i=1}^{N}{\frac{f_X^2(X_i)}{(f_{X}^{*}(X_i))^2}}\right ]}\\
		\nonumber &=\frac{(\tilde{M}(\tilde{\theta}))^{2N}\mathbb{E}_{\tilde{h}_{\boldsymbol{x}}} \left [\bold{1}_{(\sum_{i=1}^{N}{X_i} \leq \gamma)} \prod_{i=1}^{N}{g^2(X_i)} \exp \left (-2\tilde{\theta} \sum_{i=1}^{N}{X_i} \right )\right ]}{(M(\theta^*))^{2N}\mathbb{E}_{h^*_{\boldsymbol{x}}} \left [\bold{1}_{(\sum_{i=1}^{N}{X_i} \leq \gamma)} \exp \left (-2\theta^* \sum_{i=1}^{N}{X_i} \right )\right ]}\\
	\nonumber	&=\frac{(\tilde{M}(\tilde{\theta}))^N\mathbb{E}_{h_{\boldsymbol{x}}} \left [\bold{1}_{(\sum_{i=1}^{N}{X_i} \leq \gamma)} \prod_{i=1}^{N}{g(X_i)} \exp \left (-\tilde{\theta} \sum_{i=1}^{N}{X_i} \right )\right ]}{(M(\theta^*))^N\mathbb{E}_{h_{\boldsymbol{x}}} \left [\bold{1}_{(\sum_{i=1}^{N}{X_i} \leq \gamma)} \exp \left (-\theta^* \sum_{i=1}^{N}{X_i} \right )\right ]}\\
		&= \frac{(\tilde{M}(\tilde{\theta}))^N \int_{(\sum_{i=1}^{N}{x_i}  \leq \gamma)}{\prod_{i=1}^{N}{g(x_i)f_X(x_i)} \exp \left (-\tilde{\theta} \sum_{i=1}^{N}{x_i} \right )dx_1\cdots dx_N }}{(M(\theta^*))^N \int_{(\sum_{i=1}^{N}{x_i}  \leq \gamma)}{\prod_{i=1}^{N}{f_X(x_i)} \exp \left (-\theta^* \sum_{i=1}^{N}{x_i} \right )dx_1\cdots dx_N }}.
	\end{align}
First observe  that $M(\theta)=\int_{0}^{\infty}{\exp(\theta x)x^p g(x)dx}$ is well-approximated by $b\tilde{M}(\theta)=b \int_{0}^{\infty}{\exp(\theta x)x^p dx}$ for sufficiently small negative values of $\theta$. Moreover, recall that $\theta^*$ and $\tilde{\theta}$ go to $-\infty$ as either $\gamma \rightarrow 0$ or $N \rightarrow \infty$, 
and that $\theta^*$ and $\tilde{\theta}$ satisfy $\frac{M'(\theta^*)}{M(\theta^*)}=\frac{\gamma}{N}$ and  $\frac{\tilde{M}'(\tilde{\theta})}{\tilde{M}(\tilde{\theta})}=\frac{\gamma}{N}$, respectively. Thus, as $\gamma \rightarrow 0$ and/or $N \rightarrow \infty$, we obtain that $\theta^*$ is well-approximated by $\tilde{\theta}$, and hence $M(\theta^*)$ is well-approximated by $b \tilde{M}(\tilde{\theta})$. Finally, using the latter two approximations and the fact that $g(x) \sim b$ as $x$ goes to $0$, we conclude from (\ref{ratio}) that  $A_1$ is approximately equal to $A_2$ when $\gamma$ goes to $0$. For large values of $N$, the same conclusion can be deduced  by observing that $\mathbb{E}_{f_{X}^*}[X_i]=\mathbb{E}_{\tilde{f}_X}[X_i]=\frac{\gamma}{N}$, $i=1,2,\cdots,N$. Thus, the random variables $X_1,X_2,\cdots,X_N$ take, when sampled according to the IS PDFs, sufficiently small values when $N$ is sufficiently large.
\end{rem}

 \subsection{The Log-normal Case} 
Distributions that do not approach 0 polynomially are much more difficult to handle and need to be tackled on a case-by-case basis. In this work, we consider the  case of the sum of i.i.d. standard Log-normal random variables. The density decreases to $0$ at a faster rate than any polynomials and thus the Gamma distribution with fixed shape parameter will not recover the results given by the use of the exponential twisting technique. Note that in \cite{asmussen2016exponential}, the exponential twisting technique was applied to the sum of i.i.d. standard Log-normals by i) providing an unbiased estimator of the moment generating function, ii) approximating the value of $\theta$, and iii) using acceptance-rejection to sample from the IS  PDF. 
 
 The main difficulty is that the PDF of the Log-normal distribution does not have a Taylor expansion at $x=0$. The first estimator we propose is based on truncating the support $[0,+\infty]$ and only working on $[a,+\infty]$ with $a=\delta\gamma/N$. This allows the use of a Taylor expansion at $x=a$. This procedure, however, introduces a bias that needs to be controlled. We show numerically that this estimator exhibits better performances than the one based on exponential twisting. Moreover, we observe that, in the regime of rare events, the proposed estimator achieves approximately the same performances as the Gamma IS PDF with shape parameter equal to $2$. This is the main motivation behind introducing  a second estimator whose IS PDF is a Gamma PDF with optimized parameters. The numerical results show that the second estimator achieves substantial variance reduction with respect to the first estimator.
 \subsubsection{Biased estimator}
We rewrite the quantity of interest as
\begin{align}\label{eq_ln}
\nonumber \mathbb{P}_{h_{\bold{X}}}\left ( \sum_{i=1}^{N}{X_i} \leq \gamma\right )&\approx \left (1-F_X(\frac{\delta \gamma}{N}) \right )^N\\
& \times \mathbb{P} _{h_{\bold{X}}}\left ( \sum_{i=1}^{N}{X_i}\leq \gamma \Big{|}X_i>\frac{\delta\gamma}{N}, \forall i\right ),
\end{align}
where $\delta$ is a fixed value belonging to $[0,1)$. The first factor on the right-hand side has a  known closed-form expression. Let $\bar{f}_X(\cdot)$ be the PDF of $X_i |\{X_i >\frac{\delta \gamma}{N}\}$, $i=1,2,\cdots,N$, whose expression is given as follows:
$$
\bar{f}_X(x)=\frac{1}{x\sqrt{2 \pi}} \frac{\exp \left ( -\frac{(\log(x))^2}{2}\right )}{P(X_i>\frac{\delta \gamma}{N})}, \hspace{2mm} x\geq \frac{\delta \gamma}{N}.
$$
Next, we write the second factor on the right hand side of (\ref{eq_ln}) as follows:
$$
\mathbb{P}_{h_{\bold{X}}} \left ( \sum_{i=1}^{N}{X_i}\leq \gamma \Big{|}X_i>\frac{\delta\gamma}{N}, \forall i\right )= \mathbb{P}_{\bar{h}_{\bold{X}}} \left ( \sum_{i=1}^{N}{X_i} \leq \gamma\right ),
$$
with $\bar{h}_{\bold{X}}(\boldsymbol{x})=\prod_{i=1}^{N}{\bar{f}_{X}(x_i)}$. The exponential twisting IS PDF is then given by
$$
\bar{f}_X^*(x) \propto \bar{f}_X(x) \exp(\theta x), \text{  }x \geq \frac{\delta \gamma}{N}.
$$
Now, by using the Taylor expansion of $\bar{f}_{X}(\cdot)$ at the point $x=\delta\gamma/N$, we write
$$
\bar{f}_X(x)= \bar{f}_X(\frac{\delta \gamma}{N})+(x-\frac{\delta\gamma}{N})\bar{f}_X^{'}(\frac{\delta \gamma}{N})+\frac{(x-\frac{\delta \gamma}{N})^2}{2}\bar{f}_X^{''}(\xi_{x,\delta,N}),
$$
where $\xi_{x,\delta,N}$ is between $\frac{\delta \gamma}{N}$ and $x$.
Hence, the approximate exponential twisting IS PDF is given by
\begin{align}\label{change_LN}
\tilde{f}_X(x)=\frac{ \bar{f}_X\exp ( \theta x)+(x-\frac{\delta\gamma}{N})\bar{f}_X^{'} \exp(\theta x)}{\tilde M(\theta)},\hspace{2mm} x\geq \frac{\delta \gamma}{N},
\end{align}
with the notation $\bar f_X=\bar{f}_X(\frac{\delta \gamma}{N})$ and $\bar{f}_X^{'}=\bar{f}_X^{'}(\frac{\delta \gamma}{N})$. We assume that $\frac{\delta \gamma}{N}$ is strictly less than $\exp(-1)$ to ensure that $\bar{f}_X^{'}>0$. This  assumption is not restrictive, as we are interested in the rare event regime corresponding to $N$ large and/or $\gamma$ small. Through a simple computation, we get
$$
\tilde{M}(\theta)=-\frac{\exp \left ( \theta \delta \gamma/N\right )}{\theta}\bar{f}_X+\frac{\exp \left ( \theta \delta \gamma/N\right )}{\theta^2}\bar{f}_X^{'}.
$$
The value of $\theta$ that solves $\frac{\tilde{M}^{'}(\theta)}{\tilde{M}(\theta)}=\frac{\gamma}{N}$ is given by
$$
\theta=-\frac{\bar f_X-c \bar{f}_X^{'}+\sqrt{(\bar f_X-c \bar{f}_X^{'})^2+8\bar{f}_X\bar{f}_X^{'}c}}{2c\bar f_X},
$$
with $c=\frac{\gamma}{N}(1-\delta)$. The remaining part is to sample from $\tilde{f}_X(\cdot)$. To do this, we write
$$
\tilde{f}_X(x)=-\frac{\bar{f}_X\exp(\theta \delta \gamma/N)}{\tilde{M}_X(\theta)\theta}\tilde{f}_1(x)+\frac{\bar{f}_X^{'}\exp(\theta \delta \gamma/N)}{\tilde{M}_X(\theta)\theta^2}\tilde{f}_2(x),
$$
where $\tilde{f}_1(x)=-\frac{\theta \exp(\theta x)}{\exp(\theta \delta \gamma/N)}$ and $\tilde{f}_2(x)=\frac{\theta^2(x-\delta\gamma/N)\exp(\theta x)}{\exp(\theta \delta \gamma/N)}$ are two valid PDFs for $x>\delta \gamma/N$.

The question that remains is related to controlling the bias through a proper choice of the parameter $\delta$. Let $\alpha_1(\gamma,N)=\left (1-F_X(\frac{\delta \gamma}{N}) \right )^N \mathbb{P}_{h_{\bold{X}}} \left ( \sum_{i=1}^{N}{X_i}\leq \gamma \Big{|}X_i>\frac{\delta\gamma}{N}, \forall i\right )$. Then, the global relative error can be upper bounded as follows:
\begin{align}\label{error_split}
\left |\frac{\alpha(\gamma,N)-\hat{\alpha}_{1,is,M}}{\alpha(\gamma,N)} \right | \leq \frac{\alpha(\gamma,N)-\alpha_1(\gamma,N)}{\alpha(\gamma,N)}+ \left |\frac{\alpha_1(\gamma,N)-\hat{\alpha}_{1,is,M}}{\alpha_1(\gamma,N)}\right |,
\end{align}
where $\hat{\alpha}_{1,is,M}$ is the IS estimator of $\alpha_1(\gamma,N)$ based on $M$ i.i.d. realizations sampled according to $\tilde{h}_{\bold{X}}(\boldsymbol{x})=\prod_{i=1}^{N}{\tilde{f}_X(x_i)}$ where the the  PDF $\tilde{f}_X(\cdot)$ is given in (\ref{change_LN})
\begin{align*}
 \hat{\alpha}_{1,is,M}(\gamma,N)=\frac{1}{M}\sum_{k=1}^{M}{\left (1-F_X(\frac{\delta \gamma}{N}) \right )^N\textbf{1}_{(\sum_{i=1}^{N}{X_i^{(k)}} \leq \gamma)}\prod_{i=1}^{N}{\frac{\bar{f}_{X}(X_i^{(k)})}{\tilde{f}_X(X_i^{(k)})}}}.
\end{align*}
The parameter $\delta$ is then chosen to control the bias term in (\ref{error_split}), that is the first term on the right-hand side of (\ref{error_split}). The second term on the right-hand side is the statistical relative error of estimating $\alpha_1(\gamma,N)$ by $\hat{\alpha}_{1,is,M}$. From  the Central Limit Theorem (CLT), this error term is approximately proportional to the coefficient of variation of $\hat{\alpha}_{1,is,M}$. 

To achieve a global relative error of order $\epsilon$, it is sufficient to bound the two error terms, i.e., the statistical relative error and the relative bias, by $\epsilon/2$. Hence, the value of $\delta$ is selected such that the following  inequality holds
\begin{align}\label{bias}
0 \leq \frac{\alpha(\gamma,N)-\alpha_1(\gamma,N)}{\alpha(\gamma,N)}\leq \epsilon/2.
\end{align}
The following lemma provides the relation between $\delta$ and $\epsilon$ such that (\ref{bias}) is fulfilled.
\begin{lem}
\hspace{2mm}The following expression of $\delta(\epsilon,N,\gamma)$
\begin{align}
\delta(\epsilon,N,\gamma)=\frac{N}{\gamma} \exp \left ( \Phi^{-1} \left ( \frac{\epsilon}{2N} \frac{(\Phi(\log(\gamma/N)))^N}{(\Phi(\log(\gamma)))^{N-1}}\right )\right ),
\end{align}
where $\Phi(\cdot)$ is the CDF of the standard Normal distribution, ensures that (\ref{bias}) holds.
\end{lem}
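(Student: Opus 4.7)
The plan is to sandwich the bias $\alpha(\gamma,N)-\alpha_1(\gamma,N)$ between $0$ (for the lower bound) and an explicit expression involving $\Phi$ (for the upper bound), then invert to obtain the stated formula.

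First, observe that
\[
\alpha(\gamma,N)-\alpha_1(\gamma,N)=\mathbb{P}_{h_{\bold X}}\!\Bigl(\sum_{i=1}^{N}X_i\le\gamma,\ \exists\,i:\ X_i\le\tfrac{\delta\gamma}{N}\Bigr)\ge 0,
\]
which gives the left inequality in (\ref{bias}) immediately.

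For the upper bound, I would apply the union bound in the $\exists\,i$ statement and then use the i.i.d.\ assumption together with the fact that on the event $\{\sum X_i\le\gamma\}$ each $X_i\le\gamma$ (since the $X_i$ are nonnegative). This yields
\[
\alpha(\gamma,N)-\alpha_1(\gamma,N)\le N\,\mathbb{P}_{h_{\bold X}}\!\Bigl(X_1\le\tfrac{\delta\gamma}{N},\ X_i\le\gamma\ \forall i\ge 2\Bigr)\le N\,F_X(\tfrac{\delta\gamma}{N})\,F_X(\gamma)^{N-1}.
\]
For the denominator, I would use the trivial lower bound obtained by restricting to the event that each summand is at most $\gamma/N$:
\[
\alpha(\gamma,N)\ge \mathbb{P}_{h_{\bold X}}\!\Bigl(X_i\le\tfrac{\gamma}{N}\ \forall i\Bigr)=F_X(\tfrac{\gamma}{N})^{N}.
\]
Combining these and substituting $F_X(x)=\Phi(\log x)$ for the standard Log-normal, the relative bias is bounded by
\[
\frac{\alpha(\gamma,N)-\alpha_1(\gamma,N)}{\alpha(\gamma,N)}\le \frac{N\,\Phi\!\bigl(\log(\delta\gamma/N)\bigr)\,\bigl(\Phi(\log\gamma)\bigr)^{N-1}}{\bigl(\Phi(\log(\gamma/N))\bigr)^{N}}.
\]

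Finally, requiring the right-hand side to be at most $\epsilon/2$ gives
\[
\Phi\!\bigl(\log(\delta\gamma/N)\bigr)\le \frac{\epsilon}{2N}\,\frac{\bigl(\Phi(\log(\gamma/N))\bigr)^{N}}{\bigl(\Phi(\log\gamma)\bigr)^{N-1}},
\]
and inverting the strictly increasing $\Phi$ followed by exponentiation reproduces the stated formula for $\delta(\epsilon,N,\gamma)$. The step requiring the most care is the upper bound on the bias: one must justify replacing the joint event $\{\sum X_i\le\gamma,\ X_1\le\delta\gamma/N\}$ by the product $F_X(\delta\gamma/N)\,F_X(\gamma)^{N-1}$ via independence plus nonnegativity, and ensure the chosen denominator bound $F_X(\gamma/N)^N$ is tight enough in the rare-event regime so that the resulting $\delta$ is still meaningfully positive; everything else is algebraic manipulation.
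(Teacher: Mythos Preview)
Your proof is correct and follows essentially the same approach as the paper: the paper also expresses the bias as the probability of $\{\sum X_i\le\gamma\}\cap\bigcup_i\{X_i\le\delta\gamma/N\}$, enlarges to $\bigcap_i\{X_i\le\gamma\}$ via nonnegativity, applies the union bound and independence to get $N\,\Phi(\log(\delta\gamma/N))(\Phi(\log\gamma))^{N-1}$, lower bounds $\alpha(\gamma,N)$ by $(\Phi(\log(\gamma/N)))^N$, and solves for $\delta$. The only cosmetic difference is that the paper sets the upper bound equal to $\epsilon/2$ rather than writing the inequality before inverting.
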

\begin{proof}
We first write that
\begin{align}
\nonumber &\alpha(\gamma,N)-\alpha_1(\gamma,N)\\
\nonumber &=\mathbb{P}_{h_{\bold{X}}} \left ( \{\sum_{i=1}^{N}{X_i} \leq \gamma\} \cap \cup_{i=1}^{N} \{ X_i \leq \delta \gamma/N\}\right )\\
\nonumber & \leq \mathbb{P}_{h_{\bold{X}}}(\cup_{i=1}^{N} \{X_i \leq \delta \gamma/N\}\cap \cap_{i=1}^{N} \{ X_i \leq \gamma\})\\
\nonumber & \leq \sum_{i=1}^{N}\mathbb{P} _{h_{\bold{X}}}\left ( \{X_i \leq \delta \gamma/N\}\cap \cap_{j\neq i} \{ X_j \leq \gamma\}\right )\\
\nonumber &=N \mathbb{P} _{h_{\bold{X}}}\left (X_1 \leq \delta \gamma/N,X_2 \leq \gamma,\cdots, X_N \leq \gamma\right )\\
&=N \Phi(\log(\delta \gamma/N)) \left ( \Phi(\log(\gamma))\right )^{N-1}.
\end{align}
On the other hand, we have
$$
\alpha(\gamma,N) \geq \left ( \Phi(\log(\gamma/N))\right )^N.
$$
Therefore, we get
$$
\frac{\alpha(\gamma,N)-\alpha_1(\gamma,N)}{\alpha(\gamma,N)} \leq N\frac{\Phi \left (\log(\delta\gamma/N)\right ) \left ( \Phi(\log(\gamma))\right )^{N-1}}{\left (\Phi (\log(\gamma/N))\right )^{N}}.
$$
By equating the right-hand side of the above inequality with $\epsilon/2$, we obtain
$$
\delta(\epsilon,N,\gamma)=\frac{N}{\gamma} \exp \left ( \Phi^{-1} \left ( \frac{\epsilon}{2N} \frac{(\Phi(\log(\gamma/N)))^N}{(\Phi(\log(\gamma)))^{N-1}}\right )\right ),
$$ 
and hence the proof is concluded.
\end{proof}
\subsubsection{The Gamma family as an IS PDF}
When we consider a sufficiently small value of $\delta$ in the above analysis, we observe from the expression of the IS PDF in (\ref{change_LN}) that the proposed estimator with the IS PDF in (\ref{change_LN}) achieves approximately the same performance as the Gamma IS PDF with shape parameter equal to $2$. This suggests investigating whether the Gamma family can achieve further variance reduction with respect to the approach in the previous subsection. Note that the advantage of using the Gamma family as IS PDFs compared to the approach in the previous subsection is that the estimator is unbiased. Recall that the Gamma PDF is given by
\begin{align}
\tilde{f}_X(x)=\frac{x^{k-1}\exp(- x/\theta)}{\Gamma(k)\theta^k}, \text{ } x>0,
\end{align}
where $\theta>0$ and $k>0$ are the scale and shape parameters. The value of $\theta$ is chosen to be equal to $\theta=\frac{\gamma}{Nk}$ to ensure that the expected value of each of the $X_i$'s, $i=1,2,\cdots,N$, under the PDF $\tilde{f}_{X}(\cdot)$ is equal to $\frac{\gamma}{N}$. The likelihood ratio is then given by
\begin{align*}
\mathcal{L}(x_1,x_2,\cdots,x_N)=\frac{(\Gamma(k)\theta^k)^N\exp(\frac{\sum_{i=1}^{N}{x_i}}{\theta}-\frac{1}{2}\sum_{i=1}^{N}{(\log(x_i))^2})}{\prod_{i=1}^{N}{x_i^k}(\sqrt{2 \pi})^N}.
\end{align*}
The second moment of the IS estimator is bounded by
\begin{align*}
& \mathbb{E}_{\tilde{h}_{\bold{X}}} \left [\mathcal{L}^2(X_1,X_2,\cdots,X_N) \bold{1}_{(\sum_{i-1}^{N}{X_i} \leq \gamma)} \right ]\\
& \leq (\frac{\Gamma(k)\theta^k}{\sqrt{2\pi}})^{2N} \exp(\frac{2\gamma}{\theta})\\
& \times \mathbb{E}_{\tilde{h}_{\bold{X}}} \left [  \exp(-\sum_{i=1}^{N}{(\log(x_i))^2}-2k \sum_{i=1}^{N}{\log(x_i)}) \right ]\\
& \leq (\frac{\Gamma(k)(\frac{\gamma}{Nk})^k}{\sqrt{2\pi}})^{2N} \exp(2kN+k^2N).
\end{align*}
The  last  upper bound is found by maximizing the function $x \rightarrow -(\log(x))^2-2k \log(x)$ for $x>0$.
Next, using Stirling's formula for the gamma function
$\Gamma(k)= \sqrt{2\pi}k^{k-\frac{1}{2}} \exp(-k)(1+\mathcal{O}(\frac{1}{k}))$, we get
\begin{align*}
&\mathbb{E}_{\tilde{h}_{\bold{X}}} \left [\mathcal{L}^2(X_1,X_2,\cdots,X_N) \bold{1}_{(\sum_{i-1}^{N}{X_i} \leq \gamma)} \right ]\\
& \lesssim C k^{-N} \left ( \frac{\gamma}{N}\right )^{2Nk} \exp (k^2N)\\
&=C\exp(N(k^2-2k\log(N/\gamma)-\log(k)))
\end{align*}
where $C$ is a constant.  Next, the value of $k$ is chosen such that it minimizes the above right-hand side term.  The solution of this minimization problem is given as follows:
\begin{align}\label{optimal_k_value}
k^*=\frac{1}{2} \left (\log(\frac{N}{\gamma})+\sqrt{(\log(\frac{N}{\gamma}))^2+2} \right ).
\end{align}
Note that when $N$ is large and/or $\gamma$ is small, the value of $k^*$ satisfies $k^* \sim \log(\frac{N}{\gamma})$.
\section{Numerical results}
In this section, we show some selected numerical results to compare the performance of the proposed estimators compared to some of the existing estimators. We consider three scenarios depending on the distribution of $X_i$, $i=1,2,\cdots,N$: the Weibull, the Gamma-Gamma, and the Log-normal distributions. Note that the proposed approach is not restricted to these three distributions (see Table I  for a non-exhaustive list of distributions that can be handled). 

We recall that the squared coefficient of variation of an unbiased estimator $\hat{\alpha}(\gamma,N)$ of $\alpha(\gamma,N)$ has the following expression
\begin{align}
\text{SCV}(\hat{\alpha}(\gamma,N))=\frac{\mathrm{var} \left [ \hat{\alpha}(\gamma,N)\right ]}{\alpha^2(\gamma,N)}.
\end{align}
Note that, from the CLT, the number of required samples  to meet $\epsilon$ statistical relative error with $95\%$ confidence is equal to $(1.96)^2 \text{SCV}(\hat{\alpha}(\gamma,N))/\epsilon^2$. Therefore, when we compare two estimators, the one with the smaller squared coefficient of variation exhibits better performance than the other.
\subsection{Weibull Case}
In this section, we assume that $X_i$, $i=1,2,\cdots,N$, are distributed according to the Weibull distribution whose PDF is given in Table I.
The comparison is made with respect to the second IS approach of \cite{7328688} that is based on using the hazard rate twisting (HRT).  In Figure \ref{fig_weibull_1} and Figure \ref{fig_weibull_2}, we plot the squared coefficient of variations given by  the HRT technique and the proposed approach for two different values of the shape parameter: $k=1.5$ and $k=0.5$, respectively. 
The value of $\alpha(\gamma,N)$ ranges approximately from $10^{-20}$ to $10^{-6}$ (respectively from $10^{-16}$ to $10^{-6}$) using the system's parameters of Figure \ref{fig_weibull_1} (respectively of Figure \ref{fig_weibull_2}).
These figures show that the proposed approach clearly outperforms the one based on HRT.
For instance, when $k=1.5$, $\lambda=1$, $\gamma=0.5$, and $N=12$, the proposed approach is approximately $270$ times more efficient than the one based on HRT. More specifically, to meet the same accuracy, the number of samples needed by the approach based on HRT should be approximately $270$ times the number of samples needed by the proposed approach.

\begin{figure}[h!]
\centering
   \includegraphics[scale=0.45]{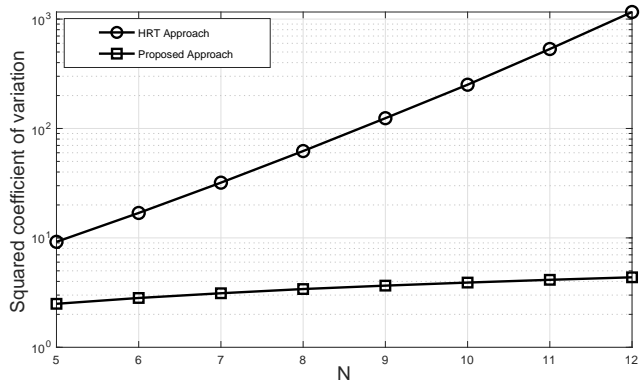}
   \caption{Squared coefficient of variation as a function of $N$ where $X_i$ are i.i.d. Weibull random variables with rate $\lambda=1$, $k=1.5$, and $\gamma=0.5$.}
   \label{fig_weibull_1}
   \end{figure}

   \begin{figure}[h!]
\centering
   \includegraphics[scale=0.45]{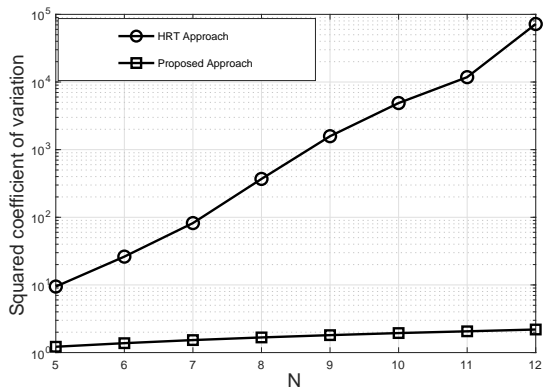}
   \caption{Squared coefficient of variation as a function of $N$ where $X_i$ are i.i.d. Weibull random variables with rate $\lambda=1$, $k=0.5$, and $\gamma=0.01$.}
   \label{fig_weibull_2}
   \end{figure}

In the next experiment, we aim to compare the proposed approach with the HRT one when $N$ is fixed and $\gamma$ decreases. 
In Figure \ref{fig_weibull_4}, we compare the efficiency of both approaches in terms of squared coefficient of variations plotted as a function of $\gamma$ for two scenarios depending on the value of $N$ ($N=8$ and $N=10$). In this case, the value of $\alpha(\gamma,N)$ ranges approximately from $10^{-16}$ to $10^{-6}$ for $N=8$ and from $10^{-22}$ to $10^{-8}$ for $N=10$.
We observe a clear outperformance of the proposed approach compared to the one based on using HRT for both values of $N$. While the HRT approach was proved in \cite{7328688} to achieve the bounded relative error property with respect to $\gamma$ and for a fixed value of $N$, it is clear from Figure \ref{fig_weibull_4} that the asymptotic bound increases substantially with respect to $N$, and hence the performance of the HRT approach is dramatically affected by increasing $N$. On the other hand, we observe that increasing the value of $N$ has a minor effect on the efficiency of the proposed approach, i.e., the squared coefficient of variation is approximately unchanged for both values of $N$ and for the considered range of $\gamma$. 
\begin{figure}[h!]
	\centering
	\includegraphics[scale=0.50]{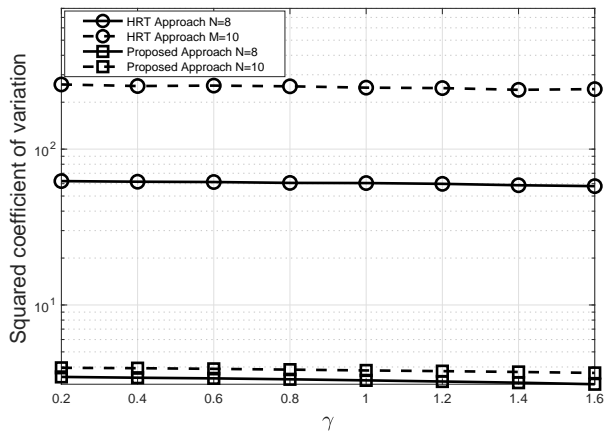}
	\caption{Squared coefficient of variation as a function of $\gamma$ where $X_i$ are i.i.d. Weibull random variables with rate $\lambda=1$, $k=1.5$.}
	\label{fig_weibull_4}
\end{figure}
This numerical observation suggests to conclude that the proposed approach satisfies the bounded relative error property with an asymptotic bound that increases with a very slow rate, compared to the one given by the HRT approach,  as we increase $N$. For illustration, the proposed approach is  approximately $18$ (respectively $64$) times more efficient than the HRT one when $N=8$ (respectively $N=10$) and $\gamma=0.2$. Note that the previous observations are valid independently of the value of $\alpha(\gamma,N)$ (see Figure \ref{fig_weibull_4}, where the squared coefficient of variation is approximately constant for a fixed value of $N$ and for the considered range of $\gamma$). This experiment and the numerical results in Figures \ref{fig_weibull_1} and \ref{fig_weibull_2} validate the ability of the  proposed approach to deliver a very accurate and efficient estimate of $\alpha(\gamma,N)$ when $N$ increases and/or $\gamma$ decreases.

\subsection{Gamma-Gamma Case}
The Gamma-Gamma distribution is used for various  challenging applications in wireless communications. For instance, it exhibited a good fit to experimental data and was used to model wireless radio-frequency channels \cite{EP} and to model atmospheric turbulences in free-space optical communication systems \cite{8238201}. The PDF of $X_i$ is given in Table I. 
 \begin{figure}[h!]
\centering
   \includegraphics[scale=0.5]{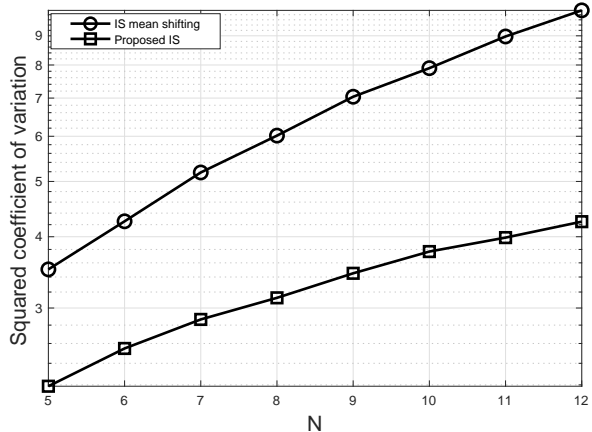}
   \caption{Squared coefficient of variation as a function of $N$ where $X_i$ are i.i.d. Gamma-Gamma random variables with $m=4$, $k=1.7$, $\Omega=1$, and $\gamma=0.5$.}
   \label{fig_GG}
   \end{figure}
In Figure~\ref{fig_GG}, we compare the proposed approach with the one in \cite{7835220} by plotting the corresponding squared coefficient of variations as a function of $N$ and for a fixed value of $\gamma$. Note that in \cite{7835220}, the proposed IS PDF is simply another Gamma-Gamma PDF with shifted mean. We call this method the IS-based mean-shifted approach. The range of the quantity of interest $\alpha(\gamma,N)$ is approximately from $10^{-18}$ to $10^{-5}$. We observe that the proposed estimator outperforms the one in \cite{7835220}. Also, we observe that the outperformance of the proposed estimator compared to the one based on mean shifting increases as we increase $N$. Moreover, we should note here that the cost per sample (in terms of CPU time) of the approach in \cite{7835220} is twice the cost of the proposed approach. This is because a Gamma-Gamma random variable is generated by the product of two independent Gamma random variables, see \cite{5425871}. For illustration, we observe from Figure~\ref{fig_GG} that when $N=12$, the proposed approach is approximately $2.5$ times (five times if we include the computing time in the comparison) more efficient than the one of \cite{7835220}.

\subsection{Log-normal Case}
The Log-normal distribution can be used to model several types of attenuation including shadowing \cite{580779},  and  weak-to-moderate turbulence channels in free-space optical communications \cite{6966082}. The standard Log-normal PDF (the associated Gaussian random variable has zero mean and unit variance) is given by
$$
f_X(x)=\frac{1}{x\sqrt{2\pi}} \exp \left ( -\frac{(\log(x))^2}{2}\right ), \text{  }x>0.
$$

   \begin{figure}[h!]
\centering
   \includegraphics[scale=0.55]{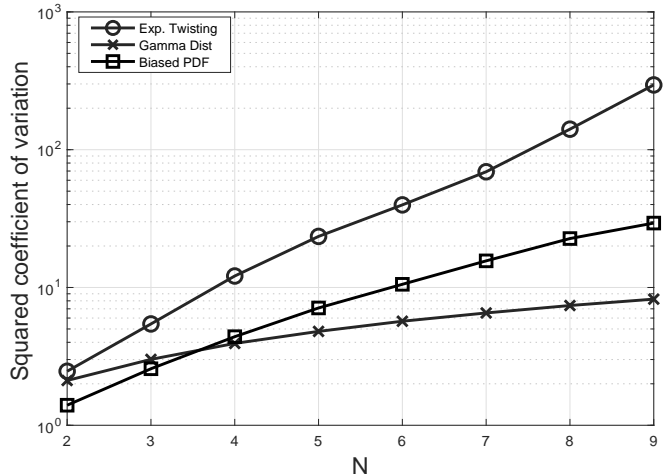}
   \caption{Squared coefficient of variation as a function of $N$ where $X_i$ are i.i.d. standard Log-normal random variables with $\gamma=0.5$, and $\epsilon=0.05$.}
   \label{fig_ln_1}
   \end{figure}
Figure \ref{fig_ln_1} shows the squared coefficient of variation given by  the exponential twisting  \cite{asmussen2016exponential}, and the two proposed approaches, i.e., the one based on the biased estimator and the other based on using the Gamma distribution as an IS PDF. The value of $\alpha(\gamma,N)$ ranges  approximately from $10^{-20}$ to $10^{-2}$.
For the considered range of $N$, we observe that out of these three approaches, it is the one using the Gamma distribution as an IS PDF that outperforms the others. When $N=9$, it is approximately $30$ times more efficient than the one based on exponential twisting. In addition to the efficiency in terms of number of samples, it is worth recalling that the exponential twisting technique developed in \cite{asmussen2016exponential} is computationally expensive in terms of computing time compared to the proposed approaches. Moreover, Figure \ref{fig_ln_1} also shows that the approach based on the biased estimator achieves better performances than the one based on exponential twisting. It is important to mention here that,  for the comparison to be fair,  the required number of samples of the biased estimator should be multiplied by $4$. This follows from the error analysis in (\ref{error_split}), in which the statistical relative error should be bounded by $\epsilon/2$, where $\epsilon$ is the required global relative error.

In Figure \ref{fig_ln_2}, we plot the squared coefficient of variations given by the three approaches as a function of $\gamma$ and for two different values of $N$ ($N=8$ and $N=10$). The quantity of interest $\alpha(\gamma,N)$ ranges approximately from $10^{-15}$ to $10^{-6}$ for $N=8$ and from $10^{-21}$ to $10^{-9}$ for $N=10$.
We observe that the approach based on using the Gamma distribution as an IS PDF clearly  asymptotically outperforms the two other approaches. For both values of $N$, the outperformance increases as we decrease $\gamma$. Moreover, the biased estimator exhibits better performances than the exponential twisting one for both values of $N$ and for the considered range of $\gamma $ values.
Furthermore, increasing $N$ has a considerable negative effect on the performances of the exponential twisting and the biased IS-based approaches.  On the other hand, Figure \ref{fig_ln_2} shows that increasing $N$ does not largely effect the performance of the IS estimator based on the use of the Gamma distribution as an IS PDF. For illustration, the approach based on using the Gamma distribution as an IS PDF is approximately $15$ times (respectively $35$) more efficient that the exponential twisting one when $N=8$ (respectively $N=10$) and $\gamma=0.6$.
   
   \begin{figure}[h!]
\centering
   \includegraphics[scale=0.5]{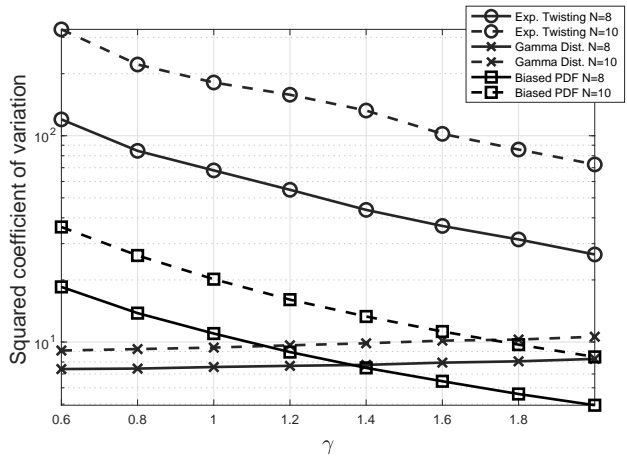}
   \caption{Squared coefficient of variation as a function of $\gamma$ where $X_i$ are i.i.d. standard Log-normal random variables with  $\epsilon=0.05$.}
   \label{fig_ln_2}
   \end{figure}

 It is important to mention that the outperformance of the estimator based on using the Gamma distribution as an IS PDF over the one based on using the biased estimator is expected. As it was mentioned above, the latter approach gives approximately the same performance as the Gamma distribution with shape parameter equal to $2$ while the former one uses the Gamma distribution as an IS PDF with an optimized shape parameter (the shape parameter was chosen to minimize an upper bound of the second moment of the proposed estimator, see the expression of $k^*$ in (\ref{optimal_k_value})).

All of the above comparisons have been carried out in terms of the number of sampled needed to meet a fixed accuracy requirement. In order to include the computing time in our comparison, we define the Work Normalized Relative Variance (WNRV) metric of an unbiased estimator $\hat{\alpha}(\gamma,N)$ of $\alpha(\gamma,N)$ as follows (see \cite{8472928}):
\begin{align}
\text{WNRV}(\hat{\alpha}(\gamma,N))=\frac{\text{SCV}(\hat{\alpha}(\gamma,N))}{M} \times \text{computing time in seconds}.
\end{align}
The computing time  is the time in seconds needed to get an estimator of $\alpha(\gamma,N)$ using $M$ i.i.d. samples of $\hat{\alpha}(\gamma,N)$. When comparing two estimators, the one that exhibits less WNRV is more efficient than the other estimator. More precisely, an estimator is efficient in terms of WNRV than another estimator means that it achieves less relative error for a given computational budget, or equivalently it needs less computing time to achieve a fixed relative error. Using the same setting as in Figure \ref{fig_ln_2}, we plot in Figure \ref{fig_ln_3} the WNRV metric as a function of $\gamma$  for two scenarios depending on the value of $N$ ($N=8$ and $N=10$). 
\begin{figure}[h!]
\centering
   \includegraphics[scale=0.5]{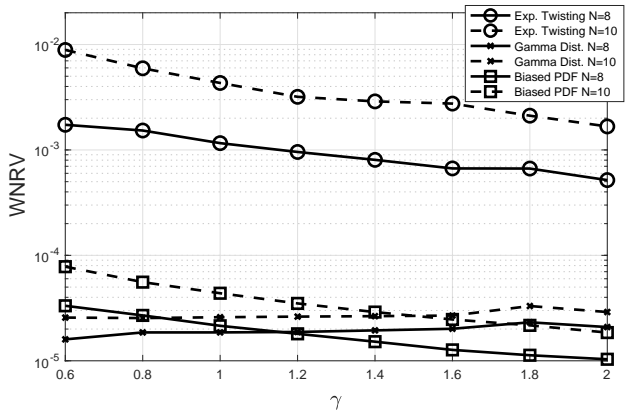}
   \caption{WNRV as a function of $\gamma$ where $X_i$ are i.i.d. standard Log-normal random variables with  $\epsilon=0.05$.}
   \label{fig_ln_3}
   \end{figure}
We observe that as $\alpha(\gamma,N)$ is getting smaller, it is the approach based on using the Gamma PDF as an IS PDF that outperforms the two other approaches in terms of WNRV (the efficiency increases as the event becomes rarer). It is worth recalling that the WNRV of the approach based on biased PDF should be multiplied by $4$ in order for the analysis to be fair (this follows from the error analysis that was performed in section 4.3.1). Moreover, Figure \ref{fig_ln_3} shows that, in addition to reducing the variance, as shown in Figure~\ref{fig_ln_2}, the approach based on using the Gamma IS PDF also reduces the computing time compared to the one using the exponential twisting technique. To see that, for $N=10$ and $\gamma=0.6$, the approach based on using the Gamma IS PDF is approximately $35$ times (respectively $340$ times) more efficient than the one based on exponential twisting when using the squared coefficient of variation metric (respectively the WNRV metric). More specifically, the Gamma based IS approach approximately reduces the computing time by a factor of $10$ with respect to the exponential twisting approach.

\section{Conclusion}
We developed efficient importance sampling estimators to estimate the rare event probabilities corresponding to the left-tail of the cumulative distribution  function of large sums of nonnegative independent and identically distributed random variables. The proposed estimators achieve asymptotically at least the same performance as the exponential twisting technique, in the regime of rare events and for certain classes of distributions that include most of the common distributions. The main conclusion is that the Gamma PDF with suitably chosen parameters achieves for most of the common distributions substantial variance reduction, and at the same time avoids the restrictive limitations of the exponential twisting technique. The numerical results validate the efficiency of the proposed approach in being able to accurately and efficiently estimate the quantity of interest in the rare event regime corresponding to large $N$ and/or small $\gamma$. One possible extension of the present work is to connect it to the works in \cite{BESKOS20171417,Jasra} by creating a sequence of approximate measures corresponding to increasing the values of $N$.

\bibliography{References}

\begin{thebibliography}{10}

\bibitem{Nadhir_SLN}
M.-S. Alouini, N.~{Ben Rached}, A.~Kammoun, and R.~Tempone.
\newblock On the efficient simulation of the left-tail of the sum of correlated
  {L}og-normal variates.
\newblock {\em Monte Carlo Methods and Applications}, 24(2):101--115, Mar.
  2018.

\bibitem{6966082}
I.~S. {Ansari}, M.-S. {Alouini}, and J.~{Cheng}.
\newblock On the capacity of {FSO} links under {L}ognormal and
  {R}ician-{L}ognormal turbulences.
\newblock In {\em In Proc. of the IEEE 80th Vehicular Technology Conference
  (VTC2014-Fall)}, pages 1--6, 2014.

\bibitem{opac-b1123521}
S.~Asmussen and P.~W. Glynn.
\newblock {\em Stochastic simulation : {A}lgorithms and {A}nalysis}.
\newblock Stochastic modelling and applied probability. Springer, New York,
  2007.

\bibitem{asmussen2016exponential}
S.~Asmussen, J.~L. Jensen, and L.~Rojas-Nandayapa.
\newblock Exponential family techniques for the lognormal left tail.
\newblock {\em Scandinavian Journal of Statistics}, 43(3):774--787, Sep. 2016.

\bibitem{9014029}
N.~C. {Beaulieu} and G.~{Luan}.
\newblock Improving simulation of {L}ognormal sum distributions with hyperspace
  replication.
\newblock In {\em In Proc. of the IEEE Global Communications Conference
  (GLOBECOM)}, pages 1--7, 2019.

\bibitem{7835220}
C.~{Ben Issaid}, N.~{Ben Rached}, A.~{Kammoun}, {M.-S. Alouini}, and
  R.~{Tempone}.
\newblock On the efficient simulation of the distribution of the sum of
  {G}amma-{G}amma variates with application to the outage probability
  evaluation over fading channels.
\newblock {\em IEEE Transactions on Communications}, 65(4):1839--1848, 2017.

\bibitem{BenRached2016}
N.~{Ben Rached}, F.~Benkhelifa, A.~Kammoun, M.-S. Alouini, and R.~Tempone.
\newblock On the generalization of the hazard rate twisting-based simulation
  approach.
\newblock {\em Statistics and Computing}, 28(1):61--75, Nov. 2016.

\bibitem{8472928}
N.~{Ben Rached}, Z.I. Botev, A.~Kammoun, M.-S. Alouini, and R.~Tempone.
\newblock On the sum of order statistics and applications to wireless
  communication systems performances.
\newblock {\em IEEE Transactions on Wireless Communications},
  17(11):7801--7813, Nov. 2018.

\bibitem{7328688}
N.~{Ben Rached}, A.~Kammoun, M.-S. Alouini, and R.~Tempone.
\newblock Unified importance sampling schemes for efficient simulation of
  outage capacity over generalized fading channels.
\newblock {\em IEEE Journal of Selected Topics in Signal Processing},
  10(2):376--388, Mar. 2016.

\bibitem{BESKOS20171417}
A.~Beskos, A.~Jasra, K.~Law, R.~Tempone, and Y.~Zhou.
\newblock Multilevel sequential {M}onte {C}arlo samplers.
\newblock {\em Stochastic Processes and their Applications}, 127(5):1417--1440,
  2017.

\bibitem{botev2019fast}
Z.I. Botev, R.~Salomone, and D.~Mackinlay.
\newblock Fast and accurate computation of the distribution of sums of
  dependent log-normals.
\newblock {\em Annals of Operations Research}, 280(1):19--46, 2019.

\bibitem{5425871}
N.~D. {Chatzidiamantis}, G.~K. {Karagiannidis}, and D.~S. {Michalopoulos}.
\newblock On the distribution of the sum of {G}amma-{G}amma variates and
  application in {MIMO} optical wireless systems.
\newblock In {\em in Proc. of the EEE Global Telecommunications Conference},
  pages 1--6, 2009.

\bibitem{4939219}
D.~B.~Da Costa and M.~D. Yacoub.
\newblock Accurate approximations to the sum of generalized random variables
  and applications in the performance analysis of diversity systems.
\newblock {\em IEEE Transactions on Communications}, 57(5):1271--1274, May.
  2009.

\bibitem{4570452}
N.~Y. Ermolova.
\newblock Moment generating functions of the generalized $\eta-\mu$ and
  $\kappa-\mu$ distributions and their applications to performance evaluations
  of communication systems.
\newblock {\em IEEE Communications Letters}, 12(7):502--504, Jul. 2008.

\bibitem{gradshteyn2007}
I.~S. Gradshteyn and I.~M. Ryzhik.
\newblock {\em Table of integrals, series, and products}.
\newblock Elsevier/Academic Press, Amsterdam, seventh edition, 2007.

\bibitem{gulisashvili2016}
A.~Gulisashvili and P.~Tankov.
\newblock Tail behavior of sums and differences of {L}og-normal random
  variables.
\newblock {\em Bernoulli}, 22(1):444--493, Feb. 2016.

\bibitem{8238201}
M.~{Hajji} and F.~E. {Bouanani}.
\newblock Performance analysis of mixed {W}eibull and {G}amma-{G}amma dual-hop
  {RF}/{FSO} transmission systems.
\newblock In {\em In Proc. of the International Conference on Wireless Networks
  and Mobile Communications (WINCOM)}, pages 1--5, 2017.

\bibitem{1388730}
J.~Hu and N.~C. Beaulieu.
\newblock Accurate closed-form approximations to {R}icean sum distributions and
  densities.
\newblock {\em IEEE Communications Letters}, 9(2):133--135, Feb. 2005.

\bibitem{Jasra}
A.~Jasra, K.~J.~H. Law, and D.~Lu.
\newblock Unbiased estimation of the gradient of the log-likelihood in inverse
  problems.
\newblock {\em Statistics and Computing}, 31(3), 2021.

\bibitem{Juneja:2002:SHT:566392.566394}
S.~Juneja and P.~Shahabuddin.
\newblock Simulating heavy tailed processes using delayed hazard rate twisting.
\newblock {\em ACM Trans. Model. Comput. Simul.}, 12(2):94--118, {A}pr. 2002.

\bibitem{opac-b1132466}
D.~P. Kroese, T.~Taimre, and Z.I. Botev.
\newblock {\em Handbook of Monte Carlo methods}.
\newblock Wiley, N.J, 2011.

\bibitem{4781943}
J.~A. Lopez-Salcedo.
\newblock Simple closed-form approximation to {R}icean sum distributions.
\newblock {\em IEEE Signal Processing Letters}, 16(3):153--155, Mar. 2009.

\bibitem{4814351}
M.~D. Renzo, F.~Graziosi, and F.~Santucci.
\newblock Further results on the approximation of {L}og-normal power sum via
  {P}earson type {IV} distribution: a general formula for log-moments
  computation.
\newblock {\em IEEE Transactions on Communications}, 57(4):893--898, Apr. 2009.

\bibitem{doi:10.1177/0037549707087713}
A.~Ridder and R.~Rubinstein.
\newblock Minimum cross-entropy methods for rare-event simulation.
\newblock {\em SIMULATION}, 83(11):769--784, 2007.

\bibitem{rubino2009rare}
G.~Rubino and B.~Tuffin.
\newblock {\em Rare Event Simulation using {M}onte {C}arlo Methods}.
\newblock Wiley, 2009.

\bibitem{EP}
P.~M. Shankar.
\newblock Error rates in generalized shadowed fading channels.
\newblock {\em Wireless Personal Communications}, 28:233--238, Feb. 2004.

\bibitem{alouini}
M.~K. Simon and M.-S. Alouini.
\newblock {\em Digital Communication over Fading Channels}.
\newblock Wiley Series in Telecommunications and Signal Processing.
  {Wiley-Interscience}, {Hoboken, N.J}, 2nd ed edition, 2005.

\bibitem{580779}
T.~T. Tjhung, C.~C. Chai, and X.~Dong.
\newblock Outage probability for {L}ognormal-shadowed {R}ician channels.
\newblock {\em IEEE Transactions on Vehicular Technology}, 46(2):400--407,
  1997.

\bibitem{8737752}
Z.~{Xiao}, B.~{Zhu}, J.~{Cheng}, and Y.~{Wang}.
\newblock Outage probability bounds of {EGC} over dual-branch non-identically
  distributed independent {L}ognormal fading channels with optimized
  parameters.
\newblock {\em IEEE Transactions on Vehicular Technology}, 68(8):8232--8237,
  Aug. 2019.

\bibitem{8744610}
B.~{Zhu} and J.~{Cheng}.
\newblock Asymptotic outage analysis on dual-branch diversity receptions over
  non-identically distributed correlated {L}ognormal channels.
\newblock {\em IEEE Transactions on Communications}, 67(10):7126--7138, Oct.
  2019.

\end{thebibliography}
\bibliographystyle{plain}
\end{document}